\newcommand{\until}[1]{\{1,\dots, #1\}}
\newcommand{\real}{\mathbb{R}}
\newcommand{\be}{\begin{equation}}
\newcommand{\ee}{\end{equation}}
\newcommand{\ba}{\begin{array}}
\newcommand{\ea}{\end{array}}
\renewcommand{\d}{\textup{d}}
\newcommand{\cl}[1]{\textup{cl}[{#1}]}
\begin{document}
\title{Modeling limited attention in opinion dynamics by topological interactions\thanks{Supported in part by MITI CNRS via 80 PRIME grant DOOM and by ANR via project HANDY, number ANR-18-CE40-0010.}}
%
%
\author{
Francesca Ceragioli\inst{1}
\and
Paolo Frasca\inst{2}
\and
Wilbert Samuel Rossi\inst{3}
}
\authorrunning{F. Ceragioli, P. Frasca, W.S. Rossi}
%
\institute{
DISMA, Politecnico di Torino, Turin, Italy\\
\email{francesca.ceragioli@polito.it}
\and
Univ.\ Grenoble Alpes, CNRS, Inria, Grenoble INP, GIPSA-lab, 38000 Grenoble, France\\
\email{paolo.frasca@gipsa-lab.fr}\\
\and
University College Groningen, University of Groningen, The Netherlands\\
\email{w.s.rossi@rug.nl}}
\maketitle              
\begin{abstract}
This work explores models of opinion dynamics with opinion-dependent connectivity. Our starting point is that individuals have limited capabilities to engage in interactions with their peers. Motivated by this observation, we propose a continuous-time opinion dynamics model such that  interactions take place with a limited number of peers: we refer to these interactions as {\em topological}, as opposed to {\em metric} interactions that are postulated in classical bounded-confidence models. We observe that topological interactions produce equilibria that are very robust to disruptions.
\keywords{Opinion dynamics  \and Limited attention \and Nonsmooth dynamical systems.}
\end{abstract}

\section{Introduction}

Driven by the evolution of digital communication and social networking services, there is an increasing interest for mathematical models of opinion dynamics in social networks. Among the many models proposed in the literature, a few have become popular in the control community
\cite{PT:2017,PT:2018:part2}. In the perspective of the control community, opinion dynamics distinguish themselves from consensus dynamics because consensus is prevented by some other feature of the dynamics. In many popular models, this feature is an opinion dependent limitation of the connectivity. Chief examples are bounded confidence models~\cite{krause:2000:discrete,Deffuant:2000}, where social agents influence each other iff their opinions are closer than a threshold.

This way of defining influence assumes that agents have always access to the opinions of all fellow agents and may lead to agents being influenced by a large number of their fellows, possibly the whole population. Instead, the number of possible interactions is bounded in  practice by the limited time and efforts that individuals can devote to social interactions. Similar {\em limitations of attention} are well documented in psychology and sociology, for instance by the notion of Dunbar number~\cite{DUNBAR1992469}, and become evermore crucial in today's age of information bonanza~\cite{gonccalves2011modeling}. Indeed, in online social media, natural limitations of attention interplay with the way the online platforms are designed.
Users interact via the contents they share: out of the pool of fresh contents, the online platform selects for each user the best contents in order to maximize engagement, mainly based on similarities between users~\cite{lazer2015rise}.
As the notion of Dunbar number was originally defined with reference to primates, the reader will not find surprising that similar ideas have also been fundamental in the study of flocking in animal groups, as testified by numerous theoretical and experimental works
\cite{Ballerini:2008:evidence,Giardina:2008:collective,Frasca:2011:animal-anisotropic,SM:14,Frasca:2017:birds-wires}.
The importance of considering networks where the number of neighbors is limited has also been understood by graph theorists, who have studied the properties of what they call $k$-nearest-neighbors graph: for instance, it is known that $k$ must be logarithmic in $n$ to ensure connectivity~\cite{Balister:2005:connectivity}.

However, few works have incorporated this important observation in suitable models of opinion dynamics. Before surveying these important references, we briefly describe the contribution of  this paper.
In our effort to make the case for limited attention in opinion dynamics, we study a simple continuous-time model (first appeared in the survey paper \cite{Piccoli:2017:chapter}) in which every agent is influenced by her closest  $k$ nearest neighbors. 
In this paper, we provide some preliminary results about this continuous-time dynamics. Our results concentrate on two axes: studying the main properties of its equilibria, including their robustness to disruptions, and proving convergence results in special cases.  We describe the equilibria of the dynamics, distinguishing a special type of {\em clusterization} equilibria that are constituted of separate clusters, and we discuss the robustness of clustered equilibria to disruptions, such as the addition of new agents.
Regarding the question of convergence, we are able to provide a proof in two cases: when the total number of agents $n$ is small enough compared to number of neighbors $k$, namely $n\leq2k+1$,
and when $k=1$, that is, agents are only influenced by one ``best friend''.
 
The difficulties in studying $k$-nearest-neighbors dynamics originate from two key features: (1) interactions are not reciprocal; (2) whether two agents interact does not only depend on their two states, but also on the states of all the other agents.
In the literature, models with any of these features are still relatively few. In classical bounded confidence models~\cite{krause:2000:discrete,Deffuant:2000}, interactions are reciprocal as long as the interaction thresholds are equal for all agents~\cite{krause:2000:discrete,VDB-JMH-JNT:09,VDB-JMH-JNT:09a,FC-PF:11,CC-FF-PT:12},
and any lack of reciprocity makes the analysis much more delicate~\cite{AM-FB:11f,BC-CW:17,GC-WS-WM-FB:19}. In our model, not only interactions are non-reciprocal, but they are also non-metric: whether two agents interact is not solely determined by the distance between their two opinions. For this reason, we follow a consolidated tradition~\cite{Ballerini:2008:evidence} and refer to our connectivity model where agents can interact with their $k$ 
 nearest neighbors as {\em topological}.

Topological interactions are becoming increasingly popular in the applied mathematics community, especially for second order models~\cite{Guo:2017:consensus-flocks}. Kinetic and continuum models with topological interactions are also actively studied~\cite{AB-PD:16,RS-ET:18,PD-MP:19}.
Among first order ``opinion'' models, \cite{DA-SM:19} has recently used Petri nets to define a class of models where interactions depend on the opinions of multiple agents: despite some similarities, our model does not belong to this class. 
In our recent papers~\cite{WSR-PF:20,WSR-PF:2018}, we have studied two dynamics with asynchronous updates (with and without sub-sampling, respectively) that are discrete-time counterparts of the model we propose here. 
Finally, our contribution here differs from the one of~\cite{Piccoli:2017:chapter} as the latter focuses on specific properties of the equilibria, such as the distribution of their clusters' sizes, studied by extensive simulations, whereas we are interested in analytical results about dynamical properties like convergence to the equilibria and about their robustness to perturbations. Our robustness analysis is inspired by the approach taken in~\cite{VDB-JMH-JNT:09} for bounded confidence models.

The rest of this paper has the following structure. Section~\ref{sect:model} introduces the model, Section~\ref{sect:anal} develops its analysis, and Section~\ref{sect:outro} discusses our results.

\section{Mathematical model}\label{sect:model}

Let $n$ and $k$ be two integers with
$$1 \leq k < n,$$ 
and let $V = \{1,\ldots,n\}$ be the set of agents.  Each agent is endowed with a scalar opinion $x_i \in \real$.
For every agent $i\in V$, her neighborhood $N_i$ is defined in the following way. The elements of $V\setminus\{i\}$ are ordered by increasing values of $|x_j-x_i|$; then, the first $k$ elements of the list (i.e. those with smallest distance from $i$) form the set $N_i$ of current neighbors of $i$.
Should a tie between two or more agents arise, priority is given to agents with lower index. 
Note that $N_i$ depends on the state, namely one should write $N_i(x(t))$: nevertheless, we omit to explicitly write the dependence of $N_i$ on the state.
Based on the current definition of $N_i$, agent $i$'s opinion evolves according to
\begin{equation}\label{eq:model}
\dot x_i=\sum_{\ell \in N_i} (x_\ell-x_i)
\end{equation}
We denote by $F(x)$ the righthand side of \eqref{eq:model}. 
In order to describe the inter-agent interactions allowed by a state $x \in \real^n$, it is convenient to define the directed graph 
$$G(x) = (V, E(x)) \quad \text{with}\quad E(x) = \bigcup_{i\in V}  \{ (i,j), j\in N_i\}\,, $$
where $N_i$ is the set of neighbors of $i$. Clearly, if $k=n-1$ the graph $G(x)$ is complete (up to self-loops). In using some simple graph theory in this paper, we take some background and standard jargon for granted: a concise summary can be found in \cite[Ch.~1]{FF-PF:17}. 
The chosen tie-breaking rule makes the right-hand side $F(x)$ well defined for any $x\in\real^n$. The neighborhoods depend on the current state and, therefore, on time. This fact makes dynamics \eqref{eq:model} a piecewise-continuous system~\cite{Ceragioli2018}. 
Its solutions shall be intended in a {\em semi-classical} sense, that is, as piecewise-smooth solutions $\phi(t)$ such that the right-derivative of the solution is equal to the right-hand side at all times, that is, $$\lim_{h\to0^+}\frac{\phi(t+h)-\phi(t)}{h}=F(\phi(t)) \quad \text{for all times $t$}.$$  We conjecture that a forward complete and unique solution exists from every initial condition: a rigorous verification of this fact, which is assumed to hold true in the rest of this paper, is left to future work. Note that choosing a more general notion of solutions, e.g. Caratheodory's, would prevent unicity and the produced multiple solutions would make the subsequent analysis more delicate.  


%
%
%
%
%
%

\section{Analysis}\label{sect:anal}
This section details our results on dynamics~\eqref{eq:model}. We first study equilibria, then convergence properties, and finally reconsider equilibria to study their stability.


\subsection{Equilibria}

A {\em cluster} is a subset of agents that have the same opinion: $C\subset V$ such that $x_i=x_j$ for all $i,j\in C$.
A state $x$ is called \emph{clusterization} if every agent belongs to a cluster of at least $k+1$ elements.
Finally, a clusterization with only one cluster is said to be a \textit{consensus}.

A state $x \in \real^n$ is said to be an \emph{equilibrium} for~\eqref{eq:model} when the right-hand side $F(x)$ is zero.
For any $i\in \until{n}$, it is immediate to see that $\dot x_i=0$ if $i$ belongs to a cluster of at least $k+1$ elements.
This condition is also necessary when $i$ is the index of the smallest or of the largest component.
Therefore, all clusterizations are equilibria and all non-consensus equilibria have at least two clusters of at least $k+1$ elements, but not all equilibria are clusterizations. It is possible to obtain a simple counterexample by considering $k=2$ and $n=7$ with 
\begin{align}
\begin{split}\label{eq:example-k2n7}
x_1=x_3=x_5=0\,, \quad
x_7 = \frac12\,, \quad
x_2 = x_4 = x_6=1.
\end{split}
\end{align}
Note that this example exploits the tie-breaking rule. However, this is not necessary, as the following example shows: 
consider $k=4$ and $n=14$ with 
\begin{align*}
&x_1=x_2=x_3=x_4=x_5=0\,, \\
&x_6=x_7 = \frac25\,, \quad
x_8=x_9 = \frac35\,,\\ 
&x_{10} = x_{11} = x_{12}=x_{13}=x_{14}=1.
\end{align*}

\subsection{Dynamical properties}
We can readily observe that, for any two agents $i$ and $j$, 
\begin{equation} \frac{\d }{\d t} (x_i-x_j)=\sum_{\ell \in N_i\setminus N_j} (x_\ell-x_i) - \sum_{m \in N_j\setminus N_i} (x_m-x_j) - |N_i\cap N_j|\, (x_i-x_j).\label{eq:basic-observation}\end{equation}
This formula allows us to derive a few  consequences. First, we observe that if $N_i(t)=N_j(t)$ for all $t\ge t_0$, then $x_i-x_j\to 0$. Second, we can deduce that the dynamics preserves the order of the agents.

\begin{proposition}[Order preservation]\label{prop:order}
If $x_{i}(t_0)>x_{j}(t_0),$ then $x_{i}(t)>x_{j}(t)$ for all $t\ge t_0$. 
\end{proposition}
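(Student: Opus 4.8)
The plan is to fix the two agents, set $d(t)=x_i(t)-x_j(t)$, and show that $d$ stays strictly positive by a first-crossing argument based on formula~\eqref{eq:basic-observation}. Assume toward a contradiction that the strict order is lost, and let $t_1=\inf\{t\ge t_0: d(t)=0\}$. If $t_1=+\infty$ we are done; otherwise, since the semi-classical solution is continuous, $d(t_1)=0$ while $d(t)>0$ on $[t_0,t_1)$. First I would record that the dynamics keeps the state in the convex hull of the initial condition (the largest component never increases and the smallest never decreases, since all neighbors of an extreme agent lie on one side of it), so the trajectory is bounded; this is used to fix finite constants later.

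The first key step is to describe the neighborhoods of $i$ and $j$ at the crossing. At $t_1$ we have $x_i=x_j=:c$; let $C$ be the cluster of agents sitting at $c$. Because $|x_\ell-x_i|=|x_\ell-x_j|=|x_\ell-c|$ for every $\ell\notin\{i,j\}$, agents $i$ and $j$ rank all other agents identically, and the tie-breaking rule (priority to the lower index) resolves equal distances the same way for both. Consequently the only asymmetry between $N_i$ and $N_j$ is the swap of $i$ and $j$ themselves: either both neighborhoods lie inside $C$ (when $|C|>k$), or $N_i\setminus N_j=\{j\}$ and $N_j\setminus N_i=\{i\}$ with a common remainder $P$ of positive-distance agents. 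In both cases \eqref{eq:basic-observation} gives $\dot x_i(t_1)=\dot x_j(t_1)$, i.e.\ the right-derivative of $d$ vanishes at $t_1$.

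A vanishing right-derivative is not yet enough, since it does not by itself prevent $d$ from reaching $0$; the decisive step is a one-sided bound of the form $\dot d \ge -M\,d$ valid for all small $d>0$, from which $\frac{\d}{\d t}\log d\ge -M$ shows that $d$ cannot hit $0$ in finite time, contradicting $d(t_1)=0$. To get this bound I would return to \eqref{eq:basic-observation}: the term $-|N_i\cap N_j|\,d$ is at least $-k\,d$, and the contribution of the swapped pair $i,j$ is $O(d)$. The remaining mismatch $N_i\triangle N_j$ can only involve agents lying exactly at the boundary of the $k$-nearest set in the limiting configuration; for such an agent the geometry is favorable, because when $x_i>x_j$ a boundary agent on the right is closer to $i$ and one on the left is closer to $j$, so these mismatch terms contribute a \emph{nonnegative} amount to $\dot d$. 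Combining these estimates, and using boundedness of the trajectory to fix a finite $M$, yields $\dot d\ge -M\,d$ near the crossing.

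The step I expect to be the main obstacle is precisely this last bookkeeping of $N_i\triangle N_j$: since the neighborhoods are non-metric, state-dependent, and may change discontinuously exactly at the instant $x_i=x_j$ through ties at the $k$-th nearest position, one has to argue carefully that every agent entering this symmetric difference is a boundary (tied) agent, and that opposite-side ties are resolved in the order-preserving direction while same-side ties produce no differential effect. Once this is established the Gronwall-type conclusion is routine, and the strict inequality $x_i(t)>x_j(t)$ follows for all $t\ge t_0$.
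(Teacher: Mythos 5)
Your skeleton---formula \eqref{eq:basic-observation} plus a one-sided Gronwall bound $\frac{\d}{\d t}(x_i-x_j)\ge -M\,(x_i-x_j)$---is exactly the paper's route, but the decisive step, which you yourself flag as the main obstacle, is not closed, and the heuristic you offer for it is false. You claim that every mismatched pair (one agent in $N_i\setminus N_j$, one in $N_j\setminus N_i$) contributes a \emph{nonnegative} amount to $\frac{\d}{\d t}(x_i-x_j)$. Counterexample: take $k=1$ and four agents with $x_j=0$, $x_u=\tfrac{d}{4}$, $x_v=\tfrac{3d}{4}$, $x_i=d$, where $d=x_i-x_j>0$. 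Then $N_i=\{v\}$ and $N_j=\{u\}$ (no ties are involved), and the mismatch contribution is $(x_v-x_i)-(x_u-x_j)=-\tfrac{d}{2}<0$. This configuration persists at every scale $d$, so it is not excluded by restricting to a neighborhood of the crossing; moreover $u$ and $v$ are precisely ``tied boundary'' agents in your limiting picture (they collapse into the cluster at $c$ together with $i$ and $j$), so your case analysis at $t_1$ does not rule them out. Mismatched pairs lying inside the shrinking gap between $x_j$ and $x_i$ are exactly the case your bookkeeping fails to handle.

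The repair is simpler than your localized analysis and makes the first-crossing time, the boundedness remark, and the description of the configuration at $t_1$ all unnecessary. Whenever $x_i>x_j$, two global facts hold: $|N_i\setminus N_j|=|N_j\setminus N_i|$ (both neighborhoods have cardinality $k$), and $x_\ell\ge x_m$ for every $\ell\in N_i\setminus N_j$ and $m\in N_j\setminus N_i$ (an agent retained by $i$ but discarded by $j$ cannot lie strictly to the left of an agent retained by $j$ but discarded by $i$; equal-distance cases are resolved consistently for $i$ and $j$ by the index rule). Hence each mismatched pair contributes
\begin{equation*}
(x_\ell-x_i)-(x_m-x_j)=(x_\ell-x_m)-(x_i-x_j)\ge -(x_i-x_j),
\end{equation*}
which is the correct pairwise bound---$\ge -(x_i-x_j)$, not $\ge 0$---and summing over all pairs together with the term $-|N_i\cap N_j|\,(x_i-x_j)$ gives $\frac{\d}{\d t}(x_i-x_j)\ge -k\,(x_i-x_j)$ at \emph{every} time at which the order holds. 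This is the paper's one-line estimate; Gronwall then prevents $x_i-x_j$ from reaching zero in finite time. So your plan is salvageable, but only after replacing the nonnegativity claim by the pairwise comparison $x_\ell\ge x_m$, which is the actual content of the paper's proof.
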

\begin{proof}
Observe that \eqref{eq:basic-observation} can be rewritten as
 \begin{align*}
 \frac{\d }{\d t} (x_i-x_j)
 &=\sum_{\ell \in N_{i}\setminus N_{j}} x_\ell - \sum_{\ell \in N_{j}\setminus N_{i}} x_m - k (x_{i}-x_{j})\\&\ge-k(x_{i}-x_{j}),
 \end{align*}
where the inequality holds because $|N_{i}\setminus N_{j}|=|N_{j}\setminus N_{i}|$ and $x_\ell \ge x_m$ for any $\ell \in N_{i}\setminus N_{j}$ and $m\in N_{j}\setminus N_{i}$. 
By this bound and Gronwall lemma, $x_{i}-x_{j}$ cannot reach zero in finite time.
\qed\end{proof}
As a consequence of this property, we can assume from now on with no loss of generality that the agents are {\em sorted in ascending order} of opinions, that is, $x_i(t)\le x_{i+1}(t)$ for all $i\in\until{n-1}$ and all $t\ge0$. The following proposition formally justifies this fact.

\begin{proposition}[Re-ordering agents]
Let $x(t)$ be a solution and $\sigma$ be a permutation on the index set $\until{n}$. Assume that for all pairs of distinct indices $i,j$ the permutation satisfies $\sigma(i)<\sigma(j)$ if either $x_{i}(0)< x_{j}(0)$ or $x_{i}(0)=x_{j}(0)$ and $i< j$. 
Then, the following facts hold true: 
\begin{enumerate}
\item $x_{\sigma(i)}(t)=x_i(t)$ for all $i\in\until{n}$ and for all $t\ge0$;
\item if $\sigma(i)<\sigma(j)$, then $x_{\sigma(i)}(t)\le x_{\sigma(j)}(t)$ for all $t\ge 0$.
\end{enumerate}
\end{proposition}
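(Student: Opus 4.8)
The plan is to treat the two assertions as coupled: I would obtain fact~1 from a symmetry-plus-uniqueness argument, and then deduce fact~2 from fact~1 together with the order preservation already established in Proposition~\ref{prop:order}. Concretely, I would introduce the relabeled trajectory $\hat x(t)$ defined by $\hat x_{\sigma(i)}(t):=x_i(t)$, whose initial datum $\hat x(0)$ is, by the hypothesis on $\sigma$, exactly the ascending rearrangement of $x(0)$. Since the paper assumes that \eqref{eq:model} has a unique forward-complete semi-classical solution from every initial condition, it suffices to show that $\hat x$ is itself a solution of \eqref{eq:model}: then $\hat x$ must coincide with the solution issued from $\hat x(0)$, which is the content of fact~1. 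Differentiating the definition gives $\dot{\hat x}_{\sigma(i)}=\dot x_i=F_i(x)$, so the whole matter reduces to the intertwining identity $F_{\sigma(i)}(\hat x)=F_i(x)$ for every $i$.

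Because $F_i(x)=\sum_{\ell\in N_i(x)}(x_\ell-x_i)$ and $\hat x_{\sigma(\ell)}=x_\ell$, this intertwining holds as soon as the neighborhoods transform correctly, namely $N_{\sigma(i)}(\hat x)=\sigma\big(N_i(x)\big)$. The relabeling is isometric, in the sense that $|\hat x_{\sigma(\ell)}-\hat x_{\sigma(i)}|=|x_\ell-x_i|$ for all $\ell$, so the ordering of the candidate neighbors of $\sigma(i)$ by distance is the image under $\sigma$ of the ordering of the candidates of $i$. Hence the only point left to verify is that $\sigma$ respects the tie-breaking rule: whenever two agents equidistant from $i$ are separated by the lower-index preference, the selected one must again be the lower-index agent in the new labeling. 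I expect this tie-breaking compatibility to be the main obstacle, precisely because the model is non-metric and the rule singles out the index order rather than the value order; reconciling an index-based convention with a value-based rearrangement is the delicate point. The tool I would use to control it is Proposition~\ref{prop:order}: since the relative order of opinions is preserved, the value ordering encountered along the trajectory is governed by the ordering at $t=0$ that defined $\sigma$, which is the alignment needed to make the index-based tie-break consistent with $\sigma$. Substituting $N_{\sigma(i)}(\hat x)=\sigma(N_i(x))$ into the computation above yields $F_{\sigma(i)}(\hat x)=F_i(x)$, and uniqueness then gives fact~1.

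To obtain fact~2 I would combine fact~1 with weak order preservation of the sorted trajectory $\hat x$. The hypothesis on $\sigma$ makes $\hat x(0)$ sorted, so $\sigma(i)<\sigma(j)$ implies $\hat x_{\sigma(i)}(0)\le \hat x_{\sigma(j)}(0)$. When the inequality is strict, Proposition~\ref{prop:order} applied to $\hat x$ keeps it strict for all $t\ge0$. When $\hat x_{\sigma(i)}(0)=\hat x_{\sigma(j)}(0)$, I would exclude a crossing by an argument at a hypothetical first coincidence-and-reversal time: evaluating \eqref{eq:basic-observation} for the pair at the instant the two opinions coincide, the coupling term $|N_i\cap N_j|(x_i-x_j)$ vanishes and the remaining terms can be signed so that the right-derivative of the gap is nonnegative, which forbids the order from reversing. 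Putting the two cases together gives $\hat x_{\sigma(i)}(t)\le \hat x_{\sigma(j)}(t)$ for all $t\ge0$, which is fact~2; through fact~1, this is precisely the statement that the agents may be taken sorted without loss of generality.
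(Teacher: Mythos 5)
Your overall strategy coincides with the paper's: for fact~1, show that the relabeled trajectory $\hat x_{\sigma(i)}:=x_i$ solves \eqref{eq:model} and invoke the assumed uniqueness of solutions, which reduces everything to the neighborhood equivariance $N_{\sigma(i)}(\hat x)=\sigma\bigl(N_i(x)\bigr)$; for fact~2, use Proposition~\ref{prop:order} in the strict case and persistence of equality via \eqref{eq:basic-observation} in the tied case. That is exactly the paper's decomposition, and you deserve credit for isolating the tie-breaking compatibility as the crux, a point the paper dismisses in a single sentence (``the definition of $\sigma$ does not interfere with the tie-breaking rule'').

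However, your resolution of that crux does not close, and this is a genuine gap. Proposition~\ref{prop:order} aligns $\sigma$ with the \emph{opinion} order along the trajectory, but the model breaks ties by \emph{index}, and a tie in $|x_j-x_i|$ can occur between two agents on \emph{opposite sides} of $x_i$, whose opinions are different. For such a pair the original index order (which governs the tie-break) can disagree with the opinion order (which governs $\sigma$), and then the original and relabeled systems select neighbors with different opinion values, so $F_{\sigma(i)}(\hat x)\neq F_i(x)$. Concretely, take $n=3$, $k=1$, $x(0)=(1,0,-1)$, so that $\sigma(1)=3$, $\sigma(2)=2$, $\sigma(3)=1$. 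In the original labeling, agent $2$ is equidistant from agents $1$ and $3$ and the tie-break selects agent $1$, giving $\dot x_2(0^+)=x_1(0)-x_2(0)=+1$; in the sorted labeling $\hat x(0)=(-1,0,1)$, the middle agent's tie-break selects the agent with opinion $-1$, giving right-derivative $-1$. The two trajectories separate immediately, so the intertwining identity fails and no appeal to order preservation can repair it: the obstruction is created at $t=0$ by the very hypothesis defining $\sigma$. What \emph{can} be salvaged is the same-value case: if the tied agents share the same opinion (which, by persistence of equality and order preservation, is the only kind of tie whose agents were also tied at time $0$, where $\sigma$ respects the index order), either selection contributes the same value to $F$, so those ties are harmless. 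The opposite-side ties must instead be excluded by assumption or handled by changing the tie-breaking convention. In fairness, this is not a defect of your plan alone: the paper's own one-line justification glosses over exactly the same configuration, and your write-up has the merit of making visible the precise claim that a complete proof would have to establish.
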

\begin{proof}
To verify the first claim, notice that the definition of $\sigma$ does not interfere with the tie-breaking rule that is used in the definition of the neighborhoods, therefore $N_{\sigma(i)}=N_i$ and the dynamics of the agent that before the permutation had index $i$ is unchanged.

To verify the second claim, observe the following facts.
If  $x_{\sigma(i)}(0)< x_{\sigma(j)}(0)$, then $x_{\sigma(i)}(t)< x_{\sigma(j)}(t)$ for $t>0$ by Proposition~\ref{prop:order}.
If $x_{\sigma(i)}(0)= x_{\sigma(j)}(0)$, then $N_{\sigma(i)}= N_{\sigma(j)}$ and therefore $x_{\sigma(i)}(t)= x_{\sigma(j)}(t)$ also for $t>0$ by \eqref{eq:basic-observation}.\qed
\end{proof}

%
%
%
%
%
%

From now on we will assume that agents are sorted in ascending order.
We  can now deduce a convergence result for small groups.
\begin{proposition}[Consensus for small groups]\label{prop:small}
If $n\le 2k+1$, then $x(t)$ converges to a consensus. 
\end{proposition}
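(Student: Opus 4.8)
The plan is to track the \emph{diameter} $d(t) := x_n(t)-x_1(t)$ of the configuration and to show that it decays exponentially, which forces the whole configuration to collapse to a single point. Since the statement lets us work with agents sorted in ascending order, $x_1(t)\le\cdots\le x_n(t)$ for all $t$ (by Proposition~\ref{prop:order} together with the re-ordering proposition), the quantity $d(t)\ge0$ measures exactly the spread of opinions, and $d(t)=0$ is equivalent to consensus. The heart of the argument will be the differential inequality $\dot d\le -d$, from which Gronwall's lemma yields $d(t)\le d(0)\,e^{-t}\to0$.

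First I would identify the neighborhoods of the two extreme agents. Because opinions are sorted, the $k$ agents closest to $x_1$ carry the $k$ smallest opinion values $x_2,\dots,x_{k+1}$, while the $k$ agents closest to $x_n$ carry the $k$ largest values $x_{n-k},\dots,x_{n-1}$. The tie-breaking rule may change \emph{which} agents are selected when several share an opinion, but it only ever exchanges agents of equal value; hence the sums appearing in \eqref{eq:model} are unaffected and one may write, unambiguously,
\begin{equation*}
\dot x_1=\sum_{j=2}^{k+1}(x_j-x_1),\qquad \dot x_n=\sum_{j=n-k}^{n-1}(x_j-x_n).
\end{equation*}
Subtracting and discarding all but the extreme (nonpositive) terms $x_{n-k}-x_n$ and $x_1-x_{k+1}$ gives
\begin{equation*}
\dot d=\dot x_n-\dot x_1\le (x_{n-k}-x_n)+(x_1-x_{k+1})=-d+(x_{n-k}-x_{k+1}).
\end{equation*}

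This is where the hypothesis enters: $n\le 2k+1$ is equivalent to $n-k\le k+1$, so that $x_{n-k}\le x_{k+1}$ and the last parenthesis is nonpositive, yielding $\dot d\le -d$. Applying Gronwall's lemma (exactly as in the proof of Proposition~\ref{prop:order}, reading the inequality on the right-derivative of the piecewise-smooth solution to accommodate the switches in the neighborhoods) gives $d(t)\le d(0)e^{-t}$, hence $d(t)\to0$. Finally, since $\dot x_1\ge 0$ and $\dot x_n\le 0$, the extreme trajectories are monotone and bounded, hence convergent; because their gap $d(t)$ vanishes they share a common limit $L$, and the sandwich $x_1\le x_i\le x_n$ forces $x_i(t)\to L$ for every $i$. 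This is precisely convergence to a consensus.

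I expect the main obstacle to be bookkeeping rather than conceptual. One must argue carefully that the tie-breaking rule does not spoil the clean expressions for $\dot x_1$ and $\dot x_n$ (it only permutes agents sharing an opinion, leaving the sums invariant), and that the Gronwall estimate remains valid across the instants at which the neighborhoods switch, where the solution is only piecewise differentiable and the comparison must be read on right-derivatives.
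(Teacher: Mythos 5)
Your proposal is correct and follows essentially the same route as the paper: both derive the differential inequality $\frac{\d}{\d t}(x_n-x_1)\le -(x_n-x_1)$ by exploiting the overlap of the extreme agents' neighborhoods guaranteed by $n\le 2k+1$, and conclude by Gronwall's lemma. The only differences are cosmetic refinements on your side --- you express the overlap through the value comparison $x_{n-k}\le x_{k+1}$ instead of the paper's shared-neighbor count $|N_1\cap N_n|\ge 1$, which neatly sidesteps the tie-breaking subtlety, and you make explicit the final step (monotone bounded extremes converging to a common limit) that the paper leaves implicit.
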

\begin{proof}
Since $n\le 2k+1$, the two agents with lowest and highest opinion share at least one neighbor. 
Therefore, their difference evolves according to
\begin{align*} \frac{\d }{\d t} (x_{n}-x_{1})&=\!\!\!\!\!\!\sum_{\ell \in N_{n}\setminus N_{1}} \!\!\!\!\!\!(x_\ell-x_{n}) -\!\!\!\!\!\! \sum_{\ell \in N_{1}\setminus N_{n}} \!\!\!\!\!\!(x_\ell-x_{1}) - |N_{1}\cap N_{n}|\, (x_{n}-x_{1})\\&\le-(x_{n}-x_{1}),\end{align*}
which implies exponential convergence to zero by Gronwall lemma. \qed
\end{proof}

Simulations suggest that the dynamics converge also for larger groups, though not necessarily to consensus; see Figure~\ref{fig:simulo}.

\begin{figure}
\psfrag{Time}{$t$}
\psfrag{States}{$x$}
\psfrag{Title}{}
\includegraphics[width=0.48\textwidth]{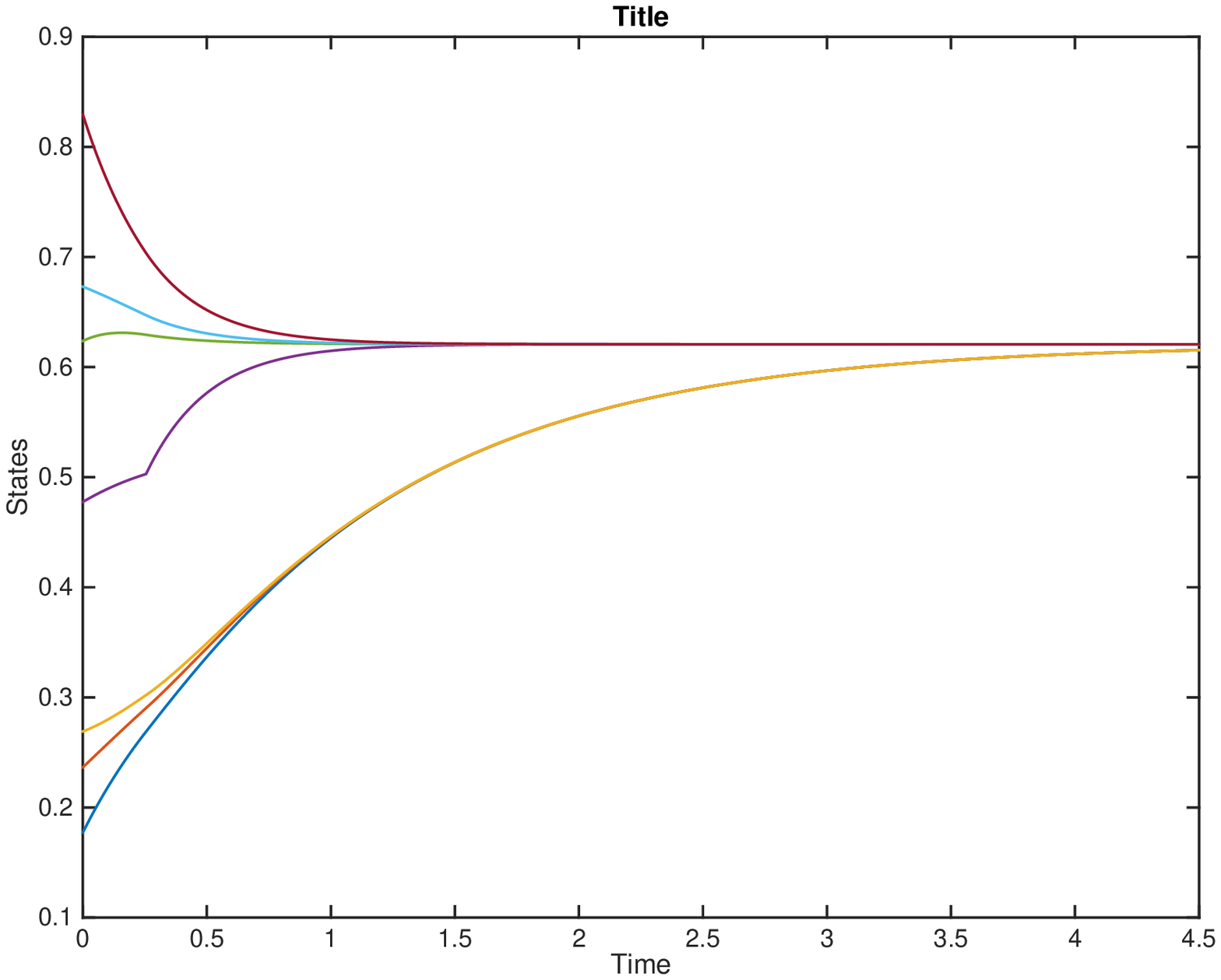}\qquad
\includegraphics[width=0.48\textwidth]{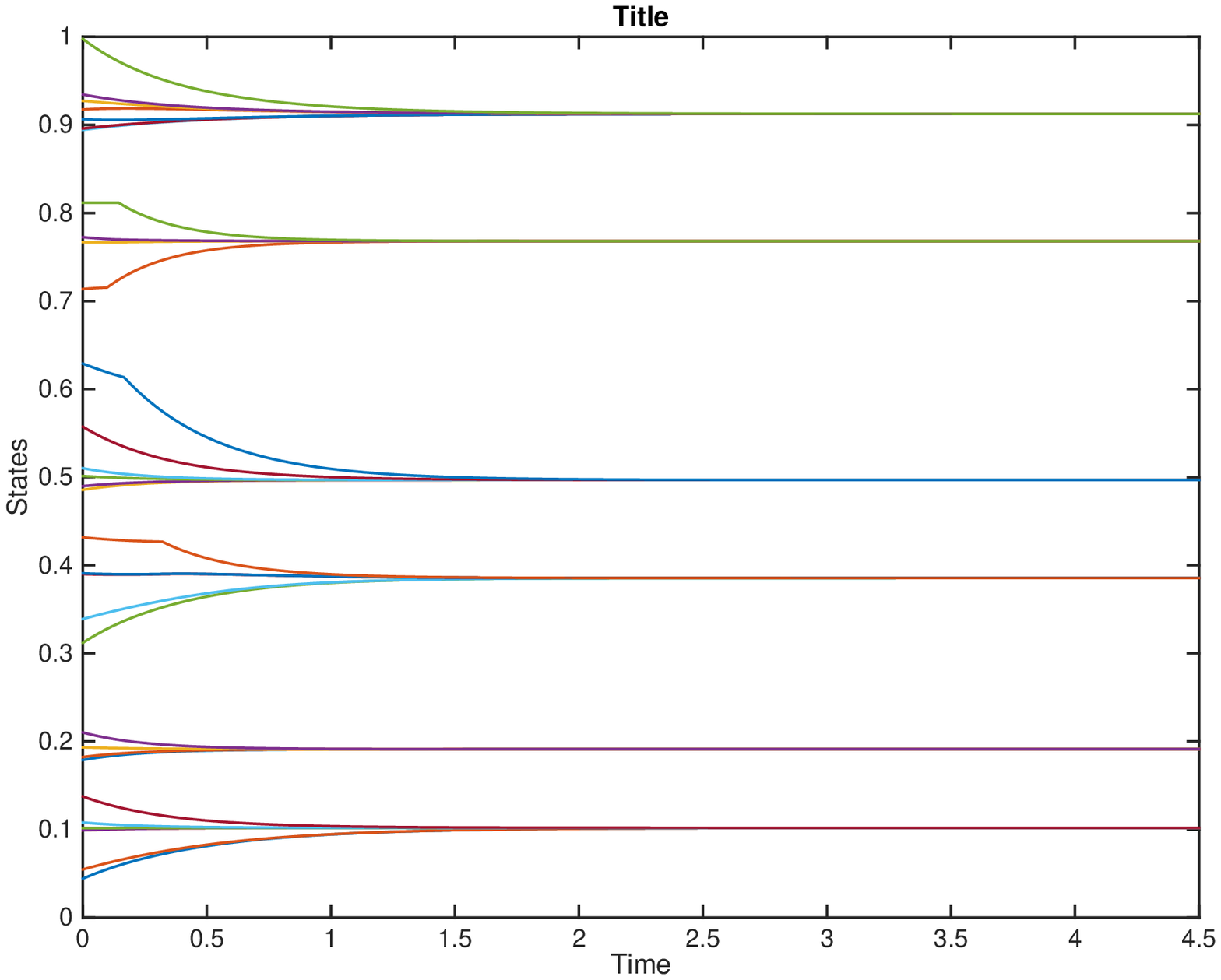}
\caption{Two typical evolutions of the dynamics with $k=3$ from random initial conditions. We observe convergence to consensus for $n=7$ (left) and to a clusterization for $n=30$ (right). The non-smooth nature of the trajectories is also very visible.}\label{fig:simulo}
\end{figure}

\subsection{Special case $k=1$}
In the case $k=1$, the dynamics takes the form $$ \dot x_i= x_{\cl{i}}-x_{i} \quad i\in\until{n},$$
where ${\cl{i}}$ denotes the index of the closest agent to $i$.
This specific form has three important consequences.

\begin{lemma}\label{lemma:case_k=1}
If $k=1$, then the following facts hold true.
\begin{enumerate}
\item All equilibria are clusterizations. 

\item For every $x\in\real^n$, the interaction graph $G(x)$ is the union of weakly connected components, such that each component contains exactly one circuit of length 2 and the two nodes of the circuit can be reached from all nodes of the component.

\item Two disconnected components cannot become connected in the evolution.

\end{enumerate}
\end{lemma}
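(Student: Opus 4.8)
The plan is to prove the three claims of Lemma~\ref{lemma:case_k=1} essentially in reverse order, since the structural claim (2) and the non-mixing claim (3) furnish the tools needed to nail down claim (1). For claim (2), I would exploit that when $k=1$ every node has out-degree exactly one: each agent $i$ points to its unique closest neighbor $\cl{i}$. A functional graph of this kind (every vertex with out-degree $1$) is classical: following the out-edges from any vertex, the walk must eventually repeat a vertex and hence enter a cycle, so each weakly connected component contains at least one directed cycle and every node can reach that cycle. The special feature to establish here is that the only cycles that can occur have length $2$. The key observation is that if $i$ points to $j$ (so $j$ is $i$'s nearest neighbor at distance $|x_j-x_i|$) and we follow the unique out-edge repeatedly, distances along a cycle cannot strictly decrease forever; more concretely, a \emph{mutual} pair arises because if $j=\cl{i}$ then the closest agent to $i$ is at distance $|x_i-x_j|$, and one shows that the minimal-distance edge in any component is necessarily reciprocated, giving a $2$-cycle, while no longer cycle can exist because traversing a cycle of length $\ge 3$ would force a strictly monotone chain of distances that cannot close up. I would formalize this by taking the pair $\{i,j\}$ realizing the minimum gap within the component and arguing $\cl{i}=j$ and $\cl{j}=i$ simultaneously, then showing uniqueness of this circuit per component.

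For claim (3), I would use the order-preservation property (Proposition~\ref{prop:order}) together with formula~\eqref{eq:basic-observation}. With agents sorted in ascending order, two components being ``disconnected'' means there is a gap between consecutive agents, say $x_{m+1}-x_m$, across which no agent in $\{1,\dots,m\}$ points to any agent in $\{m+1,\dots,n\}$ and vice versa. To become connected, some agent adjacent to the gap would have to acquire a nearest neighbor on the other side. The argument is that the internal dynamics of each component contracts opinions \emph{toward} the component (each agent moves toward its neighbor, which lies on its own side of the gap), so the gap between the two groups cannot shrink to the point where a cross-edge appears; in fact order preservation guarantees the gap stays positive, and the contractive nature of~\eqref{eq:model} within each block keeps every agent of the upper block above every agent of the lower block by at least the current gap, so the nearest neighbor of a boundary agent remains within its own component. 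I expect the careful bookkeeping here to require comparing the distance from a boundary agent to its within-component neighbor against its distance across the gap.

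Finally, claim (1) follows by combining the above with a Lyapunov-type or limiting argument. I would argue that any equilibrium $x$ must have $\dot x_i = x_{\cl{i}} - x_i = 0$ for every $i$, which forces $x_i = x_{\cl{i}}$, i.e.\ every agent shares the opinion of its nearest neighbor. Feeding this into the functional-graph structure of claim (2): within each component, the mutual pair satisfies $x_i = x_j$, and since every node reaches this $2$-cycle along a directed path on which consecutive opinions must be equal at equilibrium, all agents in the component share one common opinion. It then remains to check the cluster-size requirement, namely that each such cluster has at least $k+1 = 2$ elements, which is immediate because the circuit alone already contributes two agents with equal opinion; hence every component is a cluster of size $\ge 2$ and $x$ is a clusterization.

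The main obstacle I anticipate is the rigorous proof that no directed cycle of length $\ge 3$ can occur in claim (2), and relatedly that the minimum-gap edge is always reciprocated. The tie-breaking rule and the possibility of equal gaps make this delicate: one must argue carefully that among the out-edges forming a cycle, the shortest such edge forces mutuality, ruling out longer cycles, and handle ties via the stated lower-index priority so that the circuit is genuinely unique within each component.
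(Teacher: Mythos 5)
Your proposal is correct in substance, but it takes a genuinely different route from the paper on the structural claim (2), and it organizes the logical dependencies differently. For claim (2), the paper works with the agents sorted in ascending order and observes that $\cl{i}\in\{i-1,i+1\}$ (with $\cl{1}=2$ and $\cl{n}=n-1$), so the sequence $\delta_i=\cl{i}-i$ starts at $+1$, ends at $-1$, and must change sign an odd number of times: each change from $+$ to $-$ is exactly a reciprocal pair, and each change from $-$ to $+$ is exactly a disconnection. This one-dimensional bookkeeping yields at once that every component is a block of consecutive agents containing exactly one $2$-circuit toward which all other agents point. Your alternative --- out-degree one, hence exactly one directed cycle per weak component reachable from every node, plus a minimal-gap/monotone-distance argument excluding cycles of length $\ge 3$ --- also works; the cleanest way to close it is by indices rather than geometry: around a cycle all step distances must be equal, so every step is resolved by the tie-break, forcing $i_{t+2}<i_t$ cyclically, which is impossible. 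What the paper's version buys is that the disconnections are identified explicitly as the places where $\delta$ flips from $-$ to $+$, and that description is precisely what its proof of claim (3) then manipulates. For claim (3) your plan coincides with the paper's: the cross-gap $x_{j+1}-x_j$ is non-decreasing because both boundary agents move away from the gap, and the adjacent within-gaps are non-increasing; the paper's computation of the latter is exactly the comparison you flag as remaining bookkeeping, namely $\frac{\d}{\d t}(x_j-x_{j-1})=-(x_j-x_{j-1})-(x_{\cl{j-1}}-x_{j-1})\le 0$ since either $\cl{j-1}=j$, making the second term negative, or $\cl{j-1}=j-2$, in which case $x_{j-1}-x_{j-2}\le x_j-x_{j-1}$; this is short but it is the one step you have not written out. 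Finally, for claim (1) you make the argument heavier than necessary: once $F(x)=0$ forces $x_{\cl{i}}=x_i$ for every $i$, agent $i$ and $\cl{i}$ already lie in a common cluster of size at least $2=k+1$, so the state is a clusterization with no appeal to claim (2) at all; this is the paper's one-line proof, and it also removes the need for your reversed ordering of the claims.
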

\begin{proof} 
Claim~1: We observe that the only possibility for the right-hand side to be zero is that $x_{\cl{i}}=x_{i}$ for all $i$. 

Claim~2:  Observe that $\cl{i}$ can only be equal to either $i-1$ or $i+1$, except for the extreme agents, for which necessarily $\cl{1}=2$ and $\cl{n}=n-1$. Therefore, the sequence $\delta_i=\cl{i}-i$ is such that $\delta_1=1$ and $\delta_{n}=-1$ and must therefore change sign an odd number of times. Where it changes from positive to negative, there is a pair of reciprocal edges; where it changes from negative to positive, there is a disconnection. Therefore, every connected component has a pair of nodes that are connected to each other and that can be reached through a directed path from all other nodes. See Figure~\ref{fig:weak-graph} for an illustrative example.

Claim~3: Let there be a disconnection between $j$  and $j+1$. Then, 
$$\frac{\d }{\d t} (x_{j+1}-x_{j})= (x_{j+2}-x_{j+1}) + (x_{j}-x_{j-1})\ge 0,$$
implying that the distance $x_{j+1}-x_{j}$ cannot decrease. Moreover,
$$\frac{\d }{\d t} (x_{j}-x_{j-1})= - (x_{j}-x_{j-1})- (x_{\cl{j-1}}-x_{j-1})\le 0,$$
because the second term either is negative or, if positive, must be smaller or equal in magnitude than $x_{j}-x_{j-1}$.
Therefore, $x_{j}-x_{j-1}$ cannot increase. Since an analogous reasoning implies that $x_{j+2}-x_{j+1}$ cannot increase, the two components cannot become connected in the future. \qed
\end{proof}
\begin{figure}
\begin{center}
    \begin{tikzpicture}[
            > = stealth, 
            shorten > = 0pt, 
            auto,
            node distance = 22mm, 
            semithick 
        ]

        \tikzstyle{every state}=[
            draw = black,
            thick,
            fill = white,
            minimum size = 4mm
        ]

%
%
        
        \node[state] (1) {$1$};
        \node[state] [right of=1] (2) {$2$};
        \node[state] [right of=2] (3) {$3$};                
        \node[state] [right of=3] (4) {$4$};
        \node[state] [right of=4] (5) {$5$};        
        \node[state] [right of=5] (6) {$6$};                
        
        \path[->] (1) edge (2);
        \path[->] (2) edge (3);
        \path[->] (3) edge (4);
        \path[->, bend left] (4) edge (5);   
        \path[->, bend left] (5) edge (4);
        \path[->] (6) edge (5);
        
        \end{tikzpicture}
        \end{center}
\caption{Example of weakly connected component of graph $G(x)$.}\label{fig:weak-graph}
\end{figure}
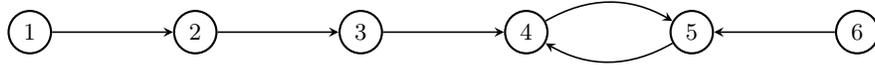

These facts allow to draw a conclusion about convergence.
\begin{proposition}[Clusterization]\label{prop:k1}
If $k=1$, then $x(t)$ converges to a clusterization.
\end{proposition}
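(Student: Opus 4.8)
The plan is to use the three structural claims of Lemma~\ref{lemma:case_k=1} to reduce convergence of the whole system to convergence of each connected component, and then to show that every component collapses to a single point. First I would exploit Claim~3: since a disconnection between consecutive agents can never be undone, the set of disconnection points is a nondecreasing subset of $\until{n-1}$ and hence stabilizes after some finite time $T$. For $t\ge T$ the partition of the (sorted) agents into weakly connected components --- each a block of consecutive indices $\{a,\dots,b\}$ with $b\ge a+1$, by Claim~2 --- is frozen, and by the definition of $G(x)$ distinct blocks carry no edges between them, so they evolve independently. Because every block contains a reciprocal pair, and hence at least two agents, it suffices to prove that each block converges to consensus: the resulting state, in which every agent lies in a cluster of size at least $k+1=2$, is exactly a clusterization.

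Next I would control the diameter of a fixed block $B=\{a,\dots,b\}$. Its leftmost agent satisfies $\cl{a}=a+1$ (it points right, whether $a=1$ or $a-1$ belongs to a disconnected block to its left) and its rightmost agent satisfies $\cl{b}=b-1$; hence $x_a$ is nondecreasing, $x_b$ is nonincreasing, and
\[
\frac{\d}{\d t}(x_b-x_a)=-(x_{a+1}-x_a)-(x_b-x_{b-1})\le 0 .
\]
Thus the block diameter $D_B=x_b-x_a$ is nonincreasing and bounded below, so it converges to some $D_B^\ast\ge 0$. Since all velocities are bounded, the two extreme gaps $x_{a+1}-x_a$ and $x_b-x_{b-1}$ are uniformly continuous, and Barbalat's lemma applied to the integral of $-\dot D_B$ forces both of them to tend to $0$.

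It remains to rule out $D_B^\ast>0$, which is the crux. Since the trajectory is bounded, pick $t_m\to\infty$ with $x(t_m)\to y$; by continuity the two extreme gaps of $y$ vanish while $D_B(y)=D_B^\ast$. If $D_B^\ast>0$, then $y_a=y_{a+1}$ and $y_{b-1}=y_b$ are disjoint coincidences (they force $b\ge a+3$), so $G(y)$ contains two distinct reciprocal pairs with indices in $\{a,\dots,b\}$. By Claim~2 of Lemma~\ref{lemma:case_k=1} these must lie in different weakly connected components, i.e. the block $B$ is disconnected in $G(y)$ --- which should contradict $B$ being connected all along the trajectory for $t\ge T$, given the permanence of disconnections (Claim~3), thereby forcing $D_B^\ast=0$. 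Squeezing any $x_i$ with $i\in B$ between the convergent monotone $x_a$ and $x_b=x_a+D_B\to x_a^\ast$ then shows the block converges to $x_a^\ast$, yielding the clusterization. I expect the main obstacle to be precisely this last step: certifying that the disconnection exhibited at the limit point $y$ is genuinely inherited by the trajectory for large $t$ --- where equal gaps and the tie-breaking rule demand care --- and, more generally, making the limiting argument rigorous for the nonsmooth, semi-classical solutions.
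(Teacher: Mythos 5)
Your first half coincides with the paper's own proof: Claim~3 of Lemma~\ref{lemma:case_k=1} plus finiteness of the agent set freezes the partition into weakly connected components after some finite time $T$, each frozen block contains a reciprocal pair and hence at least $k+1=2$ agents, so the problem reduces to proving consensus inside each block. From there the paper is much shorter than you: by Claim~2 every block contains a $2$-circuit, i.e.\ a globally reachable node, and it simply invokes the standard consensus theorem for graphs with a globally reachable node \cite[p.~61]{FF-PF:17}. You instead attempt a self-contained diameter argument, and the step you yourself flag as the obstacle is a genuine gap. Claim~3 is a statement about the solution $x(t)$ along its own trajectory; it transfers nothing to accumulation points, because the map $x\mapsto G(x)$ is discontinuous. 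Concretely, writing $g_j=x_{j+1}-x_j$ for the sorted gaps, a disconnection between $j$ and $j+1$ reads $g_j\ge g_{j-1}$ and $g_j>g_{j+1}$ (this is how the tie-breaking rule resolves); at your limit point $y$ these inequalities may hold only through equalities, in which case states arbitrarily close to $y$ --- in particular $x(t_m)$ for all large $m$ --- can be connected. So ``$B$ disconnected in $G(y)$'' does not contradict ``$B$ connected for all $t\ge T$'', and the proof does not close as written.

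The approach is nevertheless salvageable without any limiting argument, because connectivity of the block at a \emph{fixed} time already pins down where the largest gap can sit. For $t\ge T$ the leftmost agent of the block points right ($\cl{a}=a+1$, since the permanent disconnection on its left forces the gap to the preceding block to strictly exceed $g_a$) and symmetrically $\cl{b}=b-1$, so disconnections inside the block can only occur at interior positions $j\in\{a+1,\dots,b-2\}$, with the criterion above. Let $m$ be a position attaining $M=\max_{a\le j\le b-1}g_j$; if $m$ is interior, absence of disconnection at $m$ requires $g_m<g_{m-1}$ (impossible, since $g_m=M\ge g_{m-1}$) or $g_m\le g_{m+1}$, forcing $g_{m+1}=M$; iterating rightwards yields $g_{b-1}=M$. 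Hence at every $t\ge T$ one has $M=\max(g_a,g_{b-1})$, and therefore $x_b-x_a\le(b-a)\max(g_a,g_{b-1})$. Your Barbalat step ($g_a\to0$ and $g_{b-1}\to0$) then gives $D_B^\ast=0$ directly, and your squeezing conclusion finishes the proof. With this replacement your argument is correct, and it constitutes a genuinely different, self-contained alternative to the paper's citation-based route: the paper buys brevity by leaning on consensus theory for time-varying graphs, while your version makes the contraction mechanism of each block explicit at the price of the combinatorial gap analysis above.
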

\begin{proof}
The third statement of Lemma~\ref{lemma:case_k=1} implies that weakly connected component can only split.
Since the number of individuals is finite, the splitting process terminates with a finite number of constant weakly connected components.
After that termination time, the topology does not change. Since each connected component has a globally reachable node, then each group of agents is guaranteed to converge to consensus~\cite[p.~61]{FF-PF:17}, therefore producing a clusterization. \qed
\end{proof}

Unfortunately, the idea of the proof of Proposition~\ref{prop:k1} does not extend to $k>1$, because in general disconnected components can become connected.

%

\subsection{Stability and robustness of equilibria}

It is easy to see that non-clusterization equilibria are not stable in general.  For instance, consider example \eqref{eq:example-k2n7} with a small perturbation on agent 7: the ensuing dynamics leads to a clusterization with two clusters.
Instead, clusterizations exhibit several stability properties.  

We shall begin by considering small perturbations of the opinions. We say that a clusterization state is {\em structurally stable} if, after a perturbation, the dynamics  converges to another clusterization that has the same clusters (though not necessarily taking on the same opinion values). More formally, the clusterization $\bar x$ is said to be structurally stable if there exists a neighborhood of $\bar x$ such that, for every $y'$ in that neighborhood, any solution issuing from $y'$ converges to a clusterization $\bar y$ that has the following property: for any pair $i,j$ of individuals, $\bar x_i=\bar x_j$ if and only if $\bar y_i= \bar y_j$.
\begin{proposition}[Structural stability of small clusters]
A clusterization is structurally stable if and only if all of its clusters have cardinality not larger than $2k+1$.
\end{proposition}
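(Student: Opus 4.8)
The plan is to reduce the perturbed dynamics to a family of decoupled sub-systems, one for each cluster of $\bar x$, and then to invoke the small-group result of Proposition~\ref{prop:small}. The key structural fact I would establish first is an \emph{isolation lemma}: if $\bar x$ has clusters $C_1,\dots,C_p$ located at $c_1<\dots<c_p$ with minimum gap $d=\min_r(c_{r+1}-c_r)>0$, and if the perturbed initial state $y'$ satisfies $\max_i|y'_i-\bar x_i|<d/4$, then for every agent $i\in C_r$ the neighbor set $N_i$ stays contained in $C_r$ for all $t\ge 0$. At $t=0$ this holds because each cluster has at least $k+1$ agents, so agent $i$ has at least $k$ cluster-mates within distance $<d/2$, while every agent outside $C_r$ is at distance $\ge d/2$. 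To propagate this to all times I would note that, while a cluster is isolated, its top agent obeys $\dot x\le 0$ and its bottom agent obeys $\dot x\ge 0$, so the interval spanned by $C_r$ is nested and shrinking; consequently cluster diameters cannot grow and inter-cluster gaps cannot shrink, and a standard continuity (bootstrap) argument on the maximal interval of isolation shows the isolation persists for all $t$.

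For the sufficiency direction, assume every cluster has cardinality at most $2k+1$. By the isolation lemma each cluster $C_r$ evolves as an autonomous group of $m_r\le 2k+1$ agents under dynamics~\eqref{eq:model}, so Proposition~\ref{prop:small} applies and $C_r$ converges to a consensus value. Because the inter-cluster gaps never drop below $d/2$, the limiting values of distinct clusters remain distinct. Hence the trajectory from $y'$ converges to a clusterization $\bar y$ whose clusters are exactly $C_1,\dots,C_p$: for any pair $i,j$ one has $\bar y_i=\bar y_j$ precisely when $i,j$ lie in the same $C_r$, i.e.\ precisely when $\bar x_i=\bar x_j$. This is exactly structural stability.

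For necessity I would argue by contraposition. Suppose some cluster $C$ has cardinality $m\ge 2k+2$, sitting at position $c$. Given any $\delta>0$, pick $\eta\in(0,\delta)$ and perturb by moving $k+1$ agents of $C$ to $c-\eta$ and the remaining $m-k-1\ge k+1$ agents to $c+\eta$, leaving all other clusters untouched. The resulting state $y'$ is again a clusterization---each of the two pieces has at least $k+1$ agents, so every agent's $k$ nearest neighbors are its coincident cluster-mates---hence $y'$ is an equilibrium of~\eqref{eq:model}. The solution from $y'$ therefore stays at $y'$ and displays a strictly finer cluster structure than $\bar x$, so $\bar x$ cannot be structurally stable; and since $\eta$ is arbitrary, no candidate neighborhood survives. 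The step I expect to be most delicate is the \emph{for-all-time} part of the isolation lemma in the sufficiency direction: holding at $t=0$ is immediate, but one must rule out that a cluster could spread or drift enough to acquire an outside neighbor before it contracts. The monotonicity of the extreme agents' positions (top non-increasing, bottom non-decreasing) is what makes the bootstrap close, and it is the quantitative heart of the argument.
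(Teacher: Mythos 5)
Your proof is correct and takes essentially the same route as the paper's: sufficiency by applying Proposition~\ref{prop:small} to each cluster after a small perturbation, and necessity by splitting any cluster of cardinality at least $2k+2$ into two coincident groups of at least $k+1$ agents each, which is itself an equilibrium clusterization with strictly finer structure. Your isolation lemma merely makes explicit (and correctly justifies, via the monotone contraction of each cluster's extreme agents) the decoupling step that the paper's one-line sufficiency argument leaves implicit.
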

\begin{proof}
If all clusters have cardinality not larger than $2k+1$, then after the perturbation Proposition~\ref{prop:small} can be applied. To prove the opposite implication, observe that if one cluster has cardinality at least $2k+2$, then a suitable perturbation can split it into two separate clusters of cardinality at least $k+1$, thereby creating a clusterization with different structure.\qed
\end{proof}

We shall also consider different kinds of disruptions, namely the addition or removal of one agent.  We say that a clusterization is stable to these disruptions if, after the addition or removal of an agent, the other agents do not change their opinion.

\begin{proposition}[Stability to removals]
A clusterization is stable to removals if and only if all of its clusters have cardinality larger than $k+1$. 
\end{proposition}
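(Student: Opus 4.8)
The plan is to prove both implications directly, relying on the observation made in the Equilibria subsection that an agent belonging to a cluster of at least $k+1$ elements has $\dot x_i=0$, so that every clusterization is an equilibrium at which no agent moves.

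For sufficiency (the ``if'' direction), I would argue as follows. Suppose every cluster has cardinality at least $k+2$, and remove an arbitrary agent. That agent belongs to exactly one cluster, whose cardinality drops by one but stays at least $k+1$, while all other clusters are untouched. Hence the configuration on the surviving $n-1$ agents is again a clusterization, and therefore an equilibrium of the dynamics on those $n-1$ agents: every surviving agent still has at least $k$ companions at distance zero, so its $k$ nearest neighbors all share its opinion and $\dot x_i=0$. No agent changes its opinion, which is precisely the definition of stability to removals.

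For necessity I would argue by contraposition. Suppose some cluster $C$ has cardinality not larger than $k+1$; since the state is a clusterization, in fact $|C|=k+1$. Removing one agent from $C$ leaves a subset $C'$ of exactly $k$ agents sharing a common opinion. Pick any surviving $i\in C'$: its $k-1$ companions in $C'$ lie at distance zero and thus occupy $k-1$ of the $k$ slots of $N_i$, so the remaining slot is filled by the nearest agent $\ell$ outside $C$, which belongs to a different cluster and hence carries a strictly different opinion. The zero-distance neighbors contribute nothing, leaving $\dot x_i = x_\ell - x_i \neq 0$, so agent $i$ moves and the clusterization fails to be stable to removals.

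The only delicate point I foresee is the existence of the outside neighbor $\ell$ invoked in the necessity step; it is guaranteed exactly when the removal leaves a well-posed system, i.e.\ when $n-1>k$. The single exception is a consensus consisting of exactly $k+1$ agents, for which removal produces $k$ agents and exits the standing assumption $k<n$; this degenerate boundary case I would simply exclude. In every other case at least one further cluster is present, the nearest outside agent is well defined, and the argument closes without further work.
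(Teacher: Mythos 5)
Your proof is correct and follows essentially the same argument as the paper's (one-sentence) proof: a removed agent leaves every surviving agent's $k$ nearest neighbors inside its own cluster exactly when the affected cluster retains at least $k+1$ members, and otherwise forces a neighbor $\ell$ from another cluster into $N_i$, giving $\dot x_i = x_\ell - x_i \neq 0$. Your write-up is in fact more careful than the paper's, which silently passes over both the tie-breaking details and the degenerate case of a consensus of exactly $k+1$ agents, where removal violates the standing assumption $k<n$ and the claim becomes ill-posed.
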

\begin{proof}It is clear that agents in a cluster remain at equilibrium after the removal, unless the cluster size goes below the threshold $k+1$. \qed
\end{proof}

\begin{proposition}[Stability to additions]
Every clusterization is stable to additions. 
\end{proposition}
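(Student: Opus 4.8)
The plan is to prove that, after the addition, every \emph{original} agent keeps zero velocity for all times, so that its opinion never changes, while the new agent is left free to evolve on its own. The structural fact behind this is that clusters of size at least $k+1$ are rigid: fix an original agent $i$ and let $C$ be its cluster, so that $|C|\ge k+1$ and $C\setminus\{i\}$ supplies at least $k$ agents at distance exactly $0$ from $i$. Since $0$ is the least value that $|x_j-x_i|$ can attain, any selection of the $k$ nearest neighbors of $i$ consists of agents at distance $0$, i.e.\ of agents whose opinion equals $x_i$, whence $\dot x_i=\sum_{\ell\in N_i}(x_\ell-x_i)=0$.

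Next I would verify that inserting one agent, with opinion $y$ and index $n+1$, does not spoil this. If $y\ne x_i$, the new agent lies at positive distance from $i$ and is strictly farther than the $k$ cluster-mates, so it cannot enter $N_i$. If $y=x_i$, the new agent is tied with the cluster-mates at distance $0$, but, having the highest index, it is last in the tie-breaking order and is not selected since $i$ already has at least $k$ distance-$0$ neighbors of lower index; and even if it were selected it would contribute nothing. In every case $N_i$ contains only agents at opinion $x_i$, so $\dot x_i=0$ for all original $i$.

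Finally I would phrase the conclusion as an invariance property, which is what upgrades the instantaneous computation to a statement valid for all times. The two steps above use only that the original agents occupy their cluster opinions and never the position of the added agent; hence the set in which all original agents stay at their cluster values, with the added agent placed arbitrarily, is forward invariant for the $(n+1)$-agent dynamics. Since the state right after the addition lies in this set, the original opinions remain constant for every $t\ge0$, while the added agent follows the assumed unique semi-classical solution. The one delicate point is the coincidence $y=x_i$, where the new agent really does compete for a neighborhood slot; the argument survives because such an agent sits at distance $0$ and is therefore either discarded by the tie-breaking rule or contributes zero. The other thing to be careful about is precisely that the property is needed for all times and not only at the instant of insertion, which is the reason for casting it as forward invariance rather than a single velocity evaluation.
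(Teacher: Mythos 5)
Your proposal is correct and follows essentially the same approach as the paper, whose proof is the one-line observation that agents within a cluster of size at least $k+1$ will not be influenced by any new arrival; you simply make explicit the mechanism (the $k$ distance-zero cluster-mates always fill $N_i$, the newcomer is either excluded by tie-breaking or contributes zero) and the forward-invariance step that extends the conclusion to all times.
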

\begin{proof}Agents within a cluster of size at least $k+1$ will not be influenced by any new arrival. \qed
\end{proof}

\section{Conclusion}\label{sect:outro}

The stability properties of the equilibria of dynamics~\eqref{eq:model} should be contrasted with the lack thereof shown by the equilibria of the corresponding metric bounded confidence model, which reads as
\begin{equation}\label{eq:model-metric}
\dot x_i=\sum_{\ell : |x_\ell-x_i|<d} (x_\ell-x_i),
\end{equation}
where $d>0$ is an interaction radius. It is well-known~\cite{VDB-JMH-JNT:09a,FC-PF:11} that this dynamics converges to clusterizations. If a new agent is added to such a clusterization, either the new agent is too far apart from the original agents and nothing happens, or the new agent falls within the visibility radius from a cluster. In the latter case, the new agents and the agents in the cluster influence each other and therefore change their opinions, converging to a common intermediate value. Actually, if the new agent falls within the visibility radius of two clusters, the two clusters eventually merge. 

In contrast, clusters produced by \eqref{eq:model} are much more stable. In our opinion, this stability intriguingly reminds the stability that is exhibited by norms and organizations in societies. Indeed, sociologists and ethologists have observed since a long time~\cite{GS:98,van2013potent,sethi1996evolutionary} that social norms and social structures are typically rather stable across time, despite the fact that the composition of the social groups evolve, notably with the arrival of new members. Our insights about $k$-neighbor interactions suggest that limitations of attention can have stabilizing effects in societies.

\section*{Acknowledgements}
The authors are grateful to Emiliano Cristiani, Julien Hendrickx, Samuel Martin, Benedetto Piccoli and Tommaso Venturini for fruitful discussions that, along the years, have shaped their point of view on the topic of this paper.

\bibliographystyle{splncs04}


%
\end{document}